\newcommand{\AWNN}{\textsf{AWNN}\xspace}
\newcommand{\lca}{\textsf{lca}\xspace}
\newcommand{\R}{\mathbb{R}}
\newcommand{\cD}{\mathcal{D}}
\title{Insertion-Only Dynamic Connectivity in General Disk Graphs}
\author{Haim Kaplan}{School of Computer Science, Tel Aviv University}
{haimk@tau.ac.il}{}{}
\author{Katharina Klost}{Institut f\"ur Informatik, Freie Universität Berlin}
{kathklost@inf.fu-berlin.de}{}{}
\author{Kristin Knorr}
{Institut f\"ur Informatik, Freie Universität Berlin}{knorrkri@inf.fu-berlin.de}
{}{Supported
by the German Science Foundation within the research
training group `Facets of Complexity' (GRK 2434).}
\author{Wolfgang Mulzer}
{Institut f\"ur Informatik, Freie Universität Berlin}
{mulzer@inf.fu-berlin.de}{https://orcid.org/0000-0002-1948-5840}
{Supported in part by ERC StG 757609.}
\author{Liam Roditty}{Department of Computer Science, Bar Ilan University}
{liamr@macs.biu.ac.il}{}{}
\authorrunning{H. Kaplan, K. Klost, K. Knorr, W. Mulzer, and L. Roditty}
\keywords{disk graphs, dynamic, insertion only}
\begin{document}

\maketitle

\begin{abstract}
Let $S \subseteq \R^2$ be a set of $n$ \emph{sites} 
in the plane, so that every site $s \in S$ has an 
\emph{associated radius} $r_s > 0$. Let $\cD(S)$ be
the \emph{disk intersection graph} defined 
by $S$, i.e., the graph with vertex set $S$ and 
an edge between two distinct sites $s, t \in S$ 
if and only if the disks with centers $s$, $t$ 
and radii $r_s$, $r_t$ intersect.
Our goal is to design data structures that 
maintain the connectivity structure of $\cD(S)$ 
as $S$ changes dynamically over time.

We consider the incremental case, where new sites 
can be inserted into $S$. While previous work focuses on data 
structures whose running time depends on the ratio between the 
smallest and the largest site in $S$, we present a data 
structure with $O(\alpha(n))$ amortized query time and $O(\log^6 n)$ 
expected amortized insertion time. 
\end{abstract}

\section{Introduction}

The question if two vertices in a given graph are connected 
is crucial for many applications.
If multiple such \emph{connectivity queries} need to be answered,
it makes sense to preprocess the graph into a suitable data
structure. In the static case, where the graph does not
change, we get an optimal answer by using a graph search 
to determine the connected components and by labeling the 
vertices with their respective components.
In the dynamic case, where the graph can change over time,
things get more interesting, and many variants of the problem
have been studied.

We construct an insertion-only 
dynamic connectivity data structure for disk graphs. 
Given a set $S \subseteq \R^2$ of $n$ sites in the plane 
with associated radii $r_s$ for each site $s$, the \emph{disk graph  $\cD(S)$ for $S$} 
is the intersection graph of the disks $D_s$ induced by the sites and 
their radii. While $\cD(S)$ is represented by
$O(n)$ numbers describing the disks, it might have $\Theta(n^2)$ edges. 
Thus,  when we start with an empty disk graph and successively 
insert sites, up to $\Omega(n^2)$ edges may be created.
We describe a data structure whose overall running time for 
any sequence of $n$ site insertions is $o(n^2)$, while
allowing for efficient connectivity queries.
For unit disk graphs (i.e., all associated radii $r_s = 1$), 
a fully dynamic data structure with a similar performance guarantee
is already given by Kaplan et al.~\cite{kaplan_dynamic_2022a}.
In the same paper, Kaplan et al.\@ present an incremental data structure 
whose running time depends on the ratio $\Psi$ of the smallest and the 
largest radius in $S$. In this setting, they achieve $O(\alpha(n))$ 
amortized query time and $O(\log(\Psi)\log^4 n)$ expected amortized 
insertion time.

We focus on general disk graphs, with no assumption on the  
radius ratio.
Our approach has two main ingredients. 
First, we  simply represent the
connected components in a data structure for the disjoint set union problem \cite{cormen_introduction_2009}.
This allows for fast queries, in $O(\alpha(n))$ amortized time, also mentioned by Reif \cite{reif_topological_1987}.
The second ingredient is an efficient data structure
to find
all components in $\cD(S)$ that are intersected by any given disk (and hence
tells us which components need to be merged after an insertion of
a new site).
A schematic overview of the data structure is given in \autoref{fig:ins_unbounded}.

\begin{figure}
    \centering
    \includegraphics{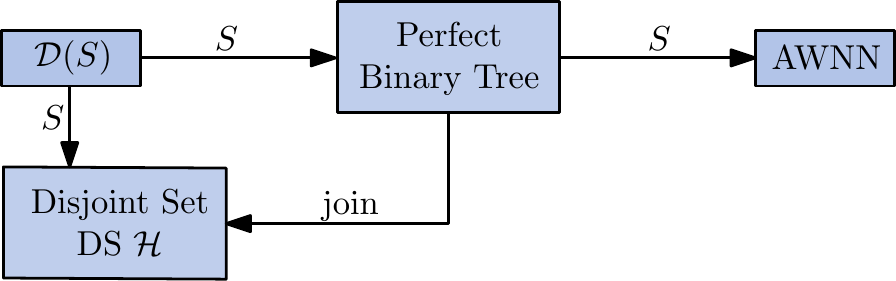}
    \caption{Overview of the data structure (\autoref{thm:ins_unbounded})}
    \label{fig:ins_unbounded}
\end{figure}

\section{Insertion-Only Data Structure for Unbounded Radii}

As in the incremental data structure for disk graphs 
with bounded radius ratio by Kaplan et al.~\cite{kaplan_dynamic_2022a}, 
we use a disjoint set union data structure to represent the
connected components of $\cD(S)$ and to perform the connectivity queries. 
To insert a new site $s$ into $S$, we first find the set 
$\mathcal{C}_s$ of
components in $\cD(S)$ that are intersected by $D_s$.
Then, once $\mathcal{C}_s$ is known, we can simply update the 
disjoint set union structure to support further queries.

   \begin{figure}
    \centering
    \includegraphics[page=5,width=\linewidth]{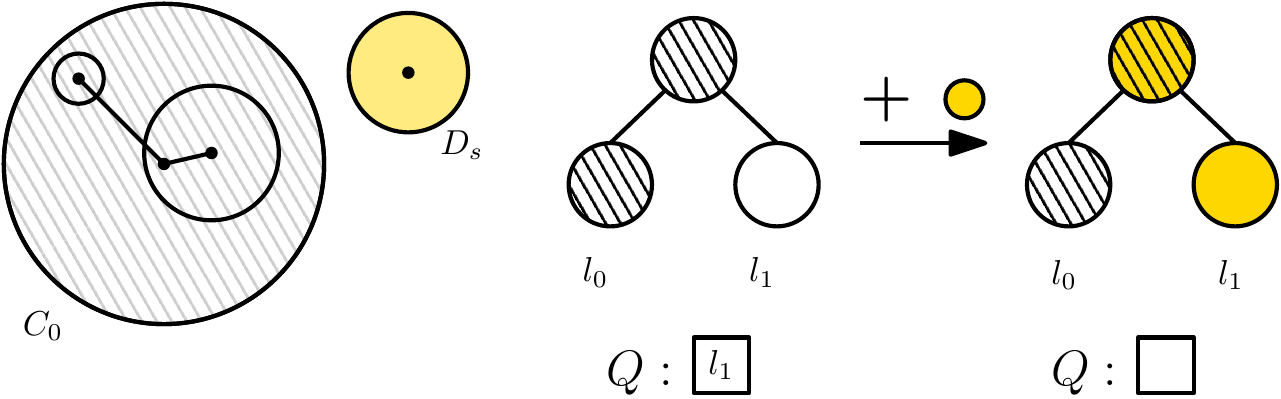}
    \caption{Disk graph with associated component tree and queue $Q$ (tiling of a component corresponds to the tiling of nodes storing its disks).}
    \label{fig:example}
\end{figure}
 
In order to identify $\mathcal{C}_s$ efficiently, 
we use a \emph{component tree} 
$T_\mathcal{C}$.
This is  a binary tree whose leaves store the connected
components of $\cD(S)$.
The idea is illustrated in \autoref{fig:example} showing a disk graph and its associated component tree.
We require that $T_\mathcal{C}$ is 
a complete binary tree, and some of its leaves may not have 
a connected component assigned to them, like $l_3$ in \autoref{fig:example}. Those leaves are \emph{empty}.
Typically, we will not distinguish the leaf storing a connected 
component and the component itself. Also when suitable, 
we will treat a connected component as a set of sites.
Every  node of $T_\mathcal{C}$ stores 
a fully dynamic additively weighted nearest neighbor data structure 
(\AWNN). 
An \AWNN stores a set \(P\) of \(n\) points, each associated with a weight \(w_p\).
On a nearest neighbor query with a point \(q \in \R^2\) it returns the point \(p\in P\) that minimizes \(\Vert pq\Vert + w_p\).
For this data structure, we use the following result by 
Kaplan et al.~\cite{kaplan_dynamic_2020} with an improvement by 
Liu~\cite{Liu20}.
\begin{lemma}[Kaplan et al.~{\cite[Theorem 8.3, Section~9]{kaplan_dynamic_2020}, Liu~\cite[Corollary~4.3]{Liu20}}]\label{lem:prelims:dynamicNN}
There is a fully dynamic \AWNN{} data structure that 
allows insertions in $O(\log^2 n)$ amortized 
expected time and deletions in $O(\log^4 n)$
amortized expected time. 
Furthermore, a nearest neighbor query takes $O(\log^2 n)$ worst case time.
The data structure requires $O(n \log n)$ space.
\end{lemma}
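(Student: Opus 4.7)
The plan is to lift the planar AWNN problem to a three-dimensional lower-envelope problem, for which dynamic data structures can be designed via Chan's shallow-cutting machinery. To each site $p \in P$ with weight $w_p$, I would associate the cone surface $z = \|(x,y) - p\| + w_p$ in $\R^3$. An AWNN query at $q$ then asks for the surface attaining the minimum $z$-value above $q$; equivalently, it is a vertical ray-shooting query into the lower envelope of these cones. This envelope projects to the additively weighted Voronoi diagram of $P$, which has linear complexity, so in the \emph{static} setting a point-location structure on that diagram gives $O(\log n)$ query time.

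To handle updates, I would adapt Chan's framework for maintaining a dynamic lower envelope of constant-description-complexity surfaces via shallow cuttings. For a shallow cutting of the $n$ cones at depth $k$, the cells cover the region above the envelope, and each cell is intersected (``conflicted'') by only $O(k)$ cones. Keeping a hierarchy of shallow cuttings at geometrically decreasing depths, a ray-shooting query can be answered in $O(\log^2 n)$ worst-case time by locating the query point in the correct cell of the coarsest cutting and then brute-forcing among its $O(\polylog n)$ conflict list. Insertions are handled by periodically rebuilding affected sub-cuttings; a randomized amortization argument à la Chan yields $O(\log^2 n)$ expected amortized insertion cost. The $O(n \log n)$ space bound follows from summing cutting sizes along the $O(\log n)$ levels.

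The difficult part is supporting deletions within a polylogarithmic budget. A direct Bentley--Saxe decomposition suffices for insertions but breaks under deletions because a deleted cone can lie near the bottom of the envelope in many cells simultaneously. To obtain the claimed $O(\log^4 n)$ deletion bound, I would follow Kaplan, Mulzer, Roditty, Seiferth, and Sharir~\cite{kaplan_dynamic_2020}: maintain an auxiliary ``weight-balanced'' tournament tree over the surfaces, rebuild a subtree of size $m$ whenever $\Theta(m)$ deletions have hit it, and charge the rebuild cost against those deletions. The improvement by Liu~\cite{Liu20} replaces the inner search structure with a more efficient point-location on the shallow cutting, shaving a logarithmic factor.

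The main obstacle I would expect is the delicate amortization for deletions, not the geometry of the cones. Insertions admit a clean decomposable-search-problem analysis, but the deletion argument requires carefully balancing rebuild costs against the depth at which surfaces participate in the hierarchy, and verifying that the cone surfaces genuinely satisfy the axiomatic requirements (pairwise intersections of constant complexity, well-behaved lower envelopes, efficiently constructible shallow cuttings) of Chan's abstract framework. Once that verification is done, the stated bounds are essentially immediate from~\cite{kaplan_dynamic_2020,Liu20}.
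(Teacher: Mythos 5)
The paper does not prove this lemma at all: it is imported verbatim as a black-box citation to Kaplan et al.\ and Liu, so there is no internal proof to compare against. Your sketch (lifting AWNN to a lower envelope of additively shifted cones, Chan-style shallow-cutting hierarchies for queries and insertions, and the rebuild-based amortization for deletions refined by Liu) is a faithful high-level account of exactly those cited constructions and matches all the stated bounds, so it is consistent with how the paper uses the result.
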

The component tree maintains the following invariants.
Invariant 2 allows us to use a query to the \AWNN to find the disk whose boundary is closed to a query point.
\begin{description}
\item[Invariant 1:] Every connected component of $\cD(S)$ 
is stored in exactly one leaf of $T_\mathcal{C}$, and
\item[Invariant 2:] The \AWNN of a node $u \in T_\mathcal{C}$ contains
the sites of all connected components that lie in the subtree rooted at $u$, 
where a site $s \in S$ has assigned weight $-r_s$.
\end{description}
In addition to $T_\mathcal{C}$, we store a queue $Q_\mathcal{C}$ that 
contains exactly the empty leaves in $T_\mathcal{C}$.

We now describe how to update $T_\mathcal{C}$ when a new site $s$
is inserted. We maintain $T_\mathcal{C}$ in such a way that
the structure of $T_\mathcal{C}$ changes only when $s$ creates
a new isolated component in $\cD(S)$.
In the following lemma, we consider the slightly more general case 
of inserting a new connected component $C$ that 
does not intersect any connected component already stored in~$T_\mathcal{C}$.

\begin{lemma}\label{lm:insertComp}
    Let $T_\mathcal{C}$ be a component tree of height $h$ 
    with $n$ sites and let $C$ be an isolated connected component. 
    We can insert $C$ into $T_\mathcal{C}$ in 
    amortized time $O(h \cdot  |C| \cdot \log^2 n)$.
\end{lemma}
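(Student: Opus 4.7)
The plan is to locate an empty leaf of $T_\mathcal{C}$, assign $C$ to it, and then propagate the sites of $C$ up the tree by inserting them into the \AWNN{} at each ancestor. First, I would consult the queue $Q_\mathcal{C}$: if it is non-empty, I dequeue an empty leaf $\ell$ in $O(1)$ time. If $Q_\mathcal{C}$ is empty, then every leaf of $T_\mathcal{C}$ currently stores a component, and I enlarge the tree by doubling it. Since $T_\mathcal{C}$ is complete, this is done by creating a new root whose left child is the old root and whose right subtree is a fresh complete binary tree of the same height; all of its leaves are empty, so I push them into $Q_\mathcal{C}$ and take one as $\ell$.

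Once $\ell$ has been identified, I assign $C$ to it and walk from $\ell$ to the root of $T_\mathcal{C}$. At each of the $O(h)$ ancestors of $\ell$, I insert every site of $C$ into that node's \AWNN{} with weight $-r_s$. By \autoref{lem:prelims:dynamicNN}, each such site insertion costs $O(\log^2 n)$ expected amortized time, so this propagation step contributes $O(h \cdot |C| \cdot \log^2 n)$ expected amortized time in total. Invariants 1 and 2 are preserved throughout: $C$ lives in a unique leaf, and its sites appear in the \AWNN{} of exactly those nodes whose subtree now contains $C$.

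The hard part will be amortizing the cost of doubling. The \AWNN{} structures at the nodes of the new right subtree are empty and essentially free to create, but the \AWNN{} at the new root must satisfy Invariant 2 and therefore has to contain all $n$ sites currently in the tree. I would build it by inserting those sites one by one, paying $O(n \log^2 n)$ expected amortized time for the rebuild. The key observation is that immediately after doubling, $Q_\mathcal{C}$ contains $n$ fresh empty leaves, and because each subsequent component insertion consumes exactly one empty leaf from $Q_\mathcal{C}$, the next doubling cannot occur until at least $n$ further component insertions have taken place. Charging each of these future insertions with an $O(\log^2 n)$ share of the rebuild cost therefore absorbs the doubling overhead into the claimed amortized bound of $O(h \cdot |C| \cdot \log^2 n)$.
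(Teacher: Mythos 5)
Your construction coincides with the paper's: dequeue an empty leaf from $Q_\mathcal{C}$ or, if none exists, double the tree by adding a new root and an empty complete sibling subtree, then insert the $|C|$ sites into the \AWNN{} of each of the $O(h)$ ancestors of the chosen leaf at $O(\log^2 n)$ expected amortized time each. That part, and the invariant argument, are fine. The only cosmetic difference is that the paper \emph{copies} the old root's \AWNN{} into the new root instead of rebuilding it by $n$ insertions; either way the doubling cost is $\widetilde{O}(n)$ and must be amortized, so the real question is your charging scheme.

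That charging scheme has a genuine gap. First, the count is wrong: after a doubling, the number of fresh empty leaves equals the number of leaves of the old tree, which (since doubling happens only when every leaf is occupied) equals the number of connected components at that moment. This is at most $n$ but can be far smaller, because a single leaf may hold a component with many sites. Hence the next doubling can occur after only $2^{h}\ll n$ further isolated-component insertions, and spreading the $O(n\log^2 n)$ rebuild over them yields $O((n/2^{h})\log^2 n)$ per insertion, which is not bounded by the claimed $O(h\cdot|C|\cdot\log^2 n)$. Second, charging to \emph{future} insertions is not a valid amortization: the operation sequence may end right after the doubling, in which case the total actual cost exceeds the total amortized cost you have assigned, violating the definition. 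The paper charges \emph{backwards}: every site present at the doubling was inserted by an earlier operation, so the rebuild cost distributes as $O(\log^2 n)$ to each of the $n$ past site insertions; since the leaf count doubles each time, there are only $O(\log n)$ doublings over the lifetime of the structure, so each past insertion absorbs $O(\log^3 n)$ in total, which is consistent with the overall analysis. Replacing your forward charge by this backward charge repairs the argument.
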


\begin{proof}
The insertion performs two basic steps: first, we find or create an empty leaf 
$l_i$ into which $C$ can be inserted. Second, the \AWNN structures along
the path from $l_i$ to the root of $T_\mathcal{C}$ are updated.

For the first step, we check if $Q_\mathcal{C}$ is non-empty.
If so, we extract the first element from 
$Q_\mathcal{C}$  to obtain our empty leaf $l_i$. 
If $Q_\mathcal{C}$ is empty, there are no empty leaves, 
and we have to expand the component tree.
For this, we create a new root for $T_\mathcal{C}$, and we 
attach the old tree as one child. The other child is an empty complete 
tree of the same size as the old tree. This creates a complete 
binary tree, see intemediate state in \autoref{fig:doubletree}.
We copy the \AWNN of the former root to the new root,  we 
add all new empty leaves to $Q_\mathcal{C}$, and we extract $l_i$
from $Q_\mathcal{C}$.

   \begin{figure}
   \centering
   \includegraphics[page=2,width=\linewidth]{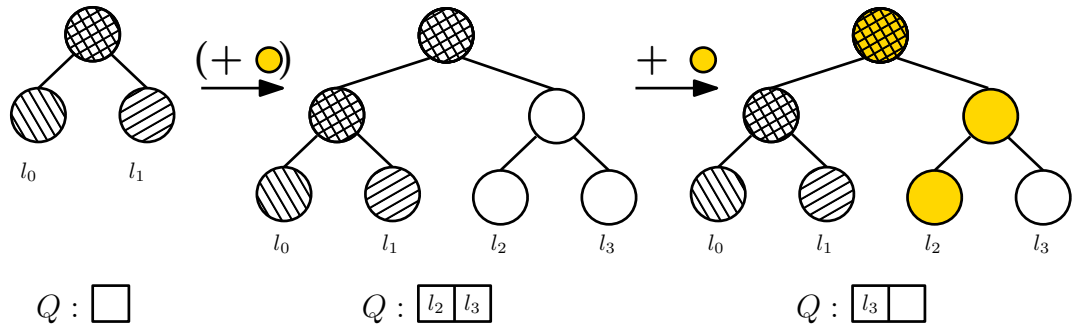}
   \caption{If the tree has no empty leaves before the insertion of an 
   isolated component, a new root and an empty subtree are added (second tree is an intermediate state before actual insertion).}
   \label{fig:doubletree}
   \end{figure}
    
For the second step, we insert $C$  into $l_i$, and we store an \AWNN structure 
with the sites from $C$ in $l_i$.
Then, the \AWNN structures on all ancestors of $l_i$ are updated 
by inserting the sites of $C$, see \autoref{fig:isolatedcomp} and \autoref{fig:doubletree} in case of tree extension respectively.

   \begin{figure}
   \centering
   \includegraphics[page=1]{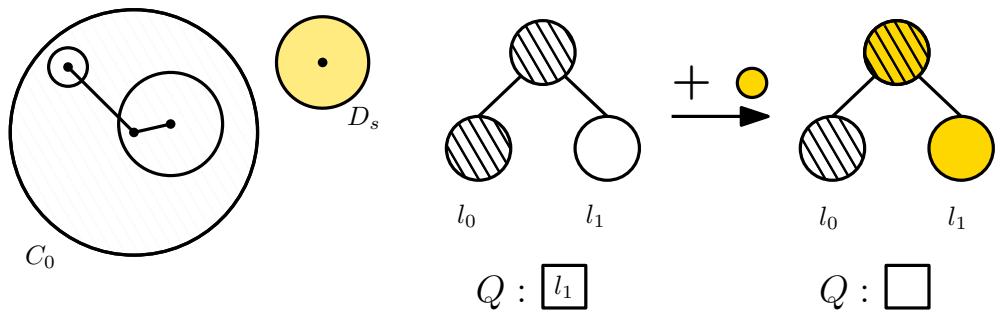}
   \caption{Inserting a component (potentially isolated disk $D_s$) into an empty leaf ($C_i$ stored in $l_i$): The isolated component $C$, yellow disk $D_s$, is inserted in the empty leaf $l_1$ and all its ancestors (indicated by coloring).}
   \label{fig:isolatedcomp}
   \end{figure}
    
    This procedure maintains both invariants:
    since an isolated component does not affect 
    the remaining connected components of $\cD(S)$, 
    it has to be inserted into a new leaf, maintaining the first invariant.
    The second invariant is taken care of in the second step, by
    construction.
    Afterwards,  the queue has the correct state, since we extract 
    the leaf used in the insertion.
    
    The running time for finding or creating 
    an empty leaf is amortized $O(1)$. 
    This is immediate if $ Q_\mathcal{C} \neq \emptyset$, 
    and otherwise, 
    we can charge the cost of building the empty tree, inserting 
    the empty leaves into $Q_\mathcal{C}$, and producing an \AWNN
    structure for the new root to the previous insertions.
    The most expensive step consists in updating the \AWNN structures
    for the new component.
    In each of the $h$  \AWNN structures of the ancestors of $l_i$, 
    we must insert $|C|$ disks. By \autoref{lem:prelims:dynamicNN},
    this results into an expected amortized time of  
    $O(h \cdot  |C| \cdot \log^2 n)$.
\end{proof}

Next, we describe how to find the set $\mathcal{C}_s$ 
of connected components that are intersected by a disk $D_s$.

\begin{lemma}\label{lm:find}
    Let $T_\mathcal{C}$ be a component tree of height $h$ that 
    stores $n$ disks. We can find $\mathcal{C}_s$ 
    in worst case time $O(\max\{|\mathcal{C}_s| \cdot h, 1\}\cdot\log^2 n\})$.
\end{lemma}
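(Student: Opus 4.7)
The plan is a top-down search of $T_\mathcal{C}$ that uses the \AWNN structures to prune subtrees whose components cannot meet $D_s$. I would first translate the geometric intersection test into an \AWNN query, then describe the recursion, and finally bound the number of visited nodes.

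Let $s \in \R^2$ also denote the center of the query disk. By Invariant~2, an \AWNN query at a node $u$ with query point $s$ returns a site $p$ in the subtree of $u$ minimizing $\|ps\| + (-r_p) = \|ps\| - r_p$. Since $D_s \cap D_p \neq \emptyset$ iff $\|ps\| \le r_s + r_p$ iff $\|ps\| - r_p \le r_s$, the returned minimum is $\le r_s$ precisely when some disk in the subtree rooted at $u$ intersects $D_s$. Combined with Invariant~1, a component $C$ belongs to $\mathcal{C}_s$ iff the leaf storing $C$ contains a disk that intersects $D_s$.

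The procedure starts at the root of $T_\mathcal{C}$. At a visited node $u$, it queries the \AWNN of $u$; if the subtree is empty or the returned value exceeds $r_s$, it stops. Otherwise, if $u$ is a leaf, it reports the component stored there; if $u$ is internal, it recurses on both children. Correctness is immediate from the equivalence above: no subtree containing a component of $\mathcal{C}_s$ is pruned, and every such component is eventually reported.

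For the running time, let $V$ be the set of visited nodes. Every internal node of $V$ (one on which we recursed into both children) is an ancestor of a leaf in $\mathcal{C}_s$, so their number is at most $|\mathcal{C}_s|\cdot h$. Each such internal node contributes at most two leaves to $V$, and the only way $V$ can contain a leaf not accounted for in this way is when $|\mathcal{C}_s|=0$ and only the root is queried. Hence $|V| = O(\max\{|\mathcal{C}_s|\cdot h,\,1\})$, and each \AWNN query takes $O(\log^2 n)$ worst-case time by \autoref{lem:prelims:dynamicNN}, giving the claimed bound. The main thing to be careful about is confirming that the additive weights $-r_p$ turn the \AWNN minimum into the correct disk-intersection test, and that the ``off-path'' children which are queried but pruned only blow up the cost by a constant factor over the $|\mathcal{C}_s|$ root-to-leaf paths.
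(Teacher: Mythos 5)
Your proposal is correct and follows essentially the same approach as the paper: a top-down traversal of $T_\mathcal{C}$ that prunes a subtree whenever its \AWNN query with $s$ returns a non-intersecting disk, with the cost charged to the $O(|\mathcal{C}_s|\cdot h)$ ancestors of the reported leaves (plus a single root query when $\mathcal{C}_s=\emptyset$). Your explicit verification that the weights $-r_p$ turn the \AWNN minimum into the disk-intersection test, and your accounting for the queried-but-pruned children, are slightly more detailed than the paper's write-up but not a different argument.
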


\begin{proof}
    First, observe that if the site returned by a query to an \AWNN 
    structure with $s$ does not intersect $D_s$,  then $D_s$ 
    does not intersect any disk for the sites stored in this \AWNN.
    Thus, the case where $\mathcal{C}_s = \emptyset$ 
    can be identified by a query to the \AWNN structure 
    in the root of $T_\mathcal{C}$,  in $O(\log^2 n)$ time. 

    In any other case, we perform a top down 
    traversal of $T_\mathcal{C}$. Let $u$ be the current node. 
    We query the \AWNN structures of both children of $u$ with $s$, 
    and we recurse only into the children 
    where the nearest neighbor intersects $D_s$.
    The set $\mathcal{C}_s$ then contains exactly the connected components 
    of all leaves where $D_s$ intersects its weighted nearest neighbor.
    Since every leaf found corresponds to one connected component 
    intersecting $D_s$ and we recurse into all subtrees whose union 
    of sites have a non-empty intersection with $s$, 
    we do not miss any connected components.

    For every connected component, there are at most $h$ 
    queries to \AWNN structures along the path from the root 
    to the components. A query takes $O(\log^2n)$ amortized time by 
    \autoref{lem:prelims:dynamicNN}, giving an amortized time of 
    $O(|\mathcal{C}_s| \cdot h \cdot \log^2 n)$, if $s$ is not isolated. 
    The overall time follows from taking the maximum of both cases.
\end{proof}

Using \autoref{lm:insertComp} and \autoref{lm:find}, we can now describe how to insert a single disk into a component tree.

\begin{lemma}\label{lm:insertDisk}
Let $T_\mathcal{C}$ be a component tree that stores $n$ sites. 
A new site $s$ can be inserted into $T_\mathcal{C}$ in $O(\log^6 n)$ 
amortized expected time.
\end{lemma}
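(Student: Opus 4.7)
My plan is to reduce the insertion of $s$ to the tools already at hand: \autoref{lm:find} for locating the components touched by $D_s$, \autoref{lm:insertComp} for the case of a new isolated site, and \autoref{lem:prelims:dynamicNN} for the \AWNN updates that arise when components merge. First I would query the tree using \autoref{lm:find} to obtain the set $\mathcal{C}_s$ of components intersected by $D_s$. If $\mathcal{C}_s = \emptyset$, then $s$ is isolated in $\cD(S \cup \{s\})$, and \autoref{lm:insertComp} applied to $C = \{s\}$ finishes the job in amortized time $O(h \log^2 n)$. Otherwise, all components in $\mathcal{C}_s$, together with $\{s\}$, become one new component $C^\ast$ of size $|C^\ast| = 1 + \sum_{C \in \mathcal{C}_s} |C|$.

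To realise the merge while preserving Invariants~1 and~2, I would use a smaller-half strategy. Let $C_{\max} \in \mathcal{C}_s$ be a component of maximum size and let $\ell_{\max}$ be its leaf. For every other $C \in \mathcal{C}_s$ with leaf $\ell_C$, I would relocate the sites of $C$ from $\ell_C$ to $\ell_{\max}$: at each ancestor of $\ell_C$ that is not also an ancestor of $\ell_{\max}$, delete the sites of $C$ from the \AWNN, and symmetrically insert them into the \AWNN at each ancestor of $\ell_{\max}$ that is not an ancestor of $\ell_C$. At any common ancestor of $\ell_C$ and $\ell_{\max}$, the \AWNN already contains the sites of $C$, so Invariant~2 is preserved automatically. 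I would then mark every vacated $\ell_C$ as empty and push it onto $Q_\mathcal{C}$, and finally insert $s$ itself into the \AWNN at $\ell_{\max}$ and all of its ancestors.

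For the running time, the call to \autoref{lm:find} costs $O(\max\{|\mathcal{C}_s| \cdot h, 1\} \log^2 n)$. Since each merge consumes $|\mathcal{C}_s| - 1$ previously existing components, and each component has to be created before it can be consumed, $\sum_s (|\mathcal{C}_s| - 1) \leq n$ over an insertion sequence of length $n$; the amortized cost of the find step is therefore $O(h \log^2 n) = O(\log^3 n)$. The dominant term comes from the site movements. Each relocated site incurs $O(h)$ insertions and $O(h)$ deletions in \AWNN structures, which by \autoref{lem:prelims:dynamicNN} takes $O(h \log^4 n) = O(\log^5 n)$ amortized expected time. For the global bound I would invoke the smaller-half estimate: when a site $x \in C$ is moved during a merge with $C_{\max}$, the new component satisfies $|C^\ast| \geq |C_{\max}| + |C| \geq 2|C|$, so the size of the component containing $x$ at least doubles at every move. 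Hence each site is moved $O(\log n)$ times in total, the total number of site movements across any $n$ insertions is $O(n \log n)$, and the amortized expected cost per insertion is $O(\log n \cdot \log^5 n) = O(\log^6 n)$.

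The main obstacle is the amortized accounting: a single insertion can merge many components and therefore move many sites, so the worst-case cost per operation is not $O(\log^6 n)$. The crucial subtlety is that the smaller-half choice of $C_{\max}$ is essential for the doubling argument—if sites of $C_{\max}$ were moved instead, the bound would fail. Combined with $h = O(\log n)$, which follows because the component tree only grows and its size is always within a factor of two of the largest number of simultaneous components ever reached (bounded by $n$), this yields the claimed $O(\log^6 n)$ amortized expected insertion time.
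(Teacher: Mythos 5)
Your proposal is correct and follows essentially the same route as the paper: find $\mathcal{C}_s$ via \autoref{lm:find}, handle the isolated case with \autoref{lm:insertComp}, and otherwise merge all intersected components into the largest one, updating the \AWNN structures only on the tree paths below the lowest common ancestor. Your explicit doubling argument for why each site is relocated only $O(\log n)$ times is exactly the justification underlying the paper's claim that each disk is merged into $O(\log n)$ components, so the accounting matches as well.
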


\begin{proof}
     First, we use the algorithm from  \autoref{lm:find}.
     to find $\mathcal{C}_s$.
    If $|\mathcal{C}_s| = 0$, we use \autoref{lm:insertComp} to 
    insert $s$ as a singleton isolated connected component.
    
     \begin{figure}
    \centering
    \includegraphics[page=3]{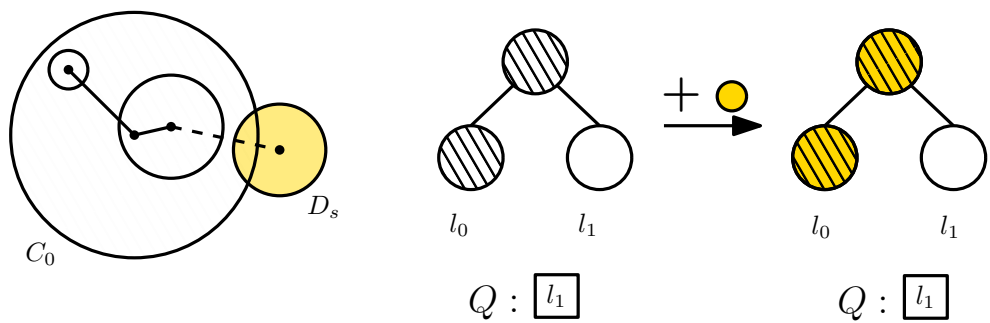}
    \caption{Inserting a site if $|\mathcal{C}_s| = 1$ ($C_i$ stored in $l_i$): The yellow disk $D_s$ intersects $C_0$ (dashed edge). Site $s$ is added to the \AWNN of the associated leaf $l_0$ and its ancestors.}
    \label{fig:insnocleanup}
    \end{figure}
    Otherwise, if $|\mathcal{C}_s| \geq 1$, 
    let $C_L =\arg\max_{C\in\mathcal{C}_s} |C|$ 
    be a largest connected component in $\mathcal{C}_s$. 
    We insert $s$ into $C_L$ and into the \AWNN structures 
    of all ancestors of $C_L$.
    Then, if $|\mathcal{C}_s| = 1$, we are done, see \autoref{fig:insnocleanup}.
    If $|\mathcal{C}_s| > 1$, all components in $\mathcal{C}_s$ 
    now form a new, larger, component in $\cD(S)$. We perform the 
    following \emph{clean-up} step in $T_\mathcal{C}$.

    For each component $C_i \in \mathcal{C}_s \setminus \{C_L\}$, 
    all sites from $C_i$ are inserted into $C_L$.
    Let $\lca$ be the lowest common ancestor of the leaves for $C_i$ and $C_L$ 
    in $T_\mathcal{C}$.
    Then all sites from $C_i$ are deleted from the \AWNN structures 
    along the path from $C_i$ to $\lca$, and  reinserted along the path 
    from  $C_L$ to $\lca$.
    Finally, all newly empty leaves are inserted into $Q_\mathcal{C}$.
    For an illustration of the insertion of $s$ 
    and the clean-up step, see \autoref{fig:inscleanup}.

    \begin{figure}
    \centering
    \includegraphics[page=4,width=\linewidth]{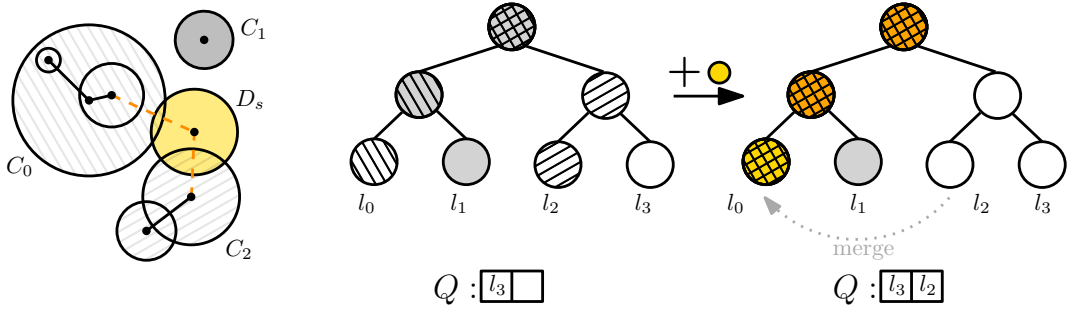}
    \caption{Inserting a site if \(|\mathcal{C}_s| > 1\) ($C_i$ stored in $l_i$): $D_s$ intesects $C_0$ and $C_2$. Since $|C_0|=3$ and $|C_2|=2$, the largest component $C_L$ is $C_0$. Thus, $D_s$ and $C_2$ are merged into $C_0$ and $C_2$ is removed from $l_2$ up to the $\lca$, the root. The empty leave $l_2$ is enqueued.}
    \label{fig:inscleanup}
    \end{figure}
    To show correctness, we again  
    argue that the invariants are maintained. 
    If $|\mathcal{C}_s| = 0$ this follows by \autoref{lm:insertComp}.
    In the other case, we directly insert $s$ 
    into a connected component intersected by $s$ 
    and update all \AWNN structures along the way. 
    As \autoref{lm:find} correctly finds all relevant connected components, 
    and we explicitly move the sites in these components to $C_L$ 
    during clean-up, Invariant~1 is fulfilled.
    In a similar vein, we update all \AWNN structures 
    of sites that move to a new connected component, satisfying Invariant~2.
    Moreover, we keep $Q_\mathcal{C}$ updated by inserting or 
    removing empty leaves when needed during the algorithm.

    To complete the proof, it remains to analyze the running time. 
   In the worst case, where all components are singletons, a 
   component tree that stores $n$ sites has height $O(\log n)$.
    If $|\mathcal{C}_s| = 0$ the running time for finding 
    $\mathcal{C}_s$ is $O(\log^2 n)$ by \autoref{lm:find}.
    The insertion and restructuring is done with \autoref{lm:insertComp}, 
    yielding an expected amortized time of $O(\log^3 n)$.
    In the case $|\mathcal{C}_s| = 1$, with $h = O(\log n)$ 
    a running time of $O(\log^3 n)$ for finding $\mathcal{C}_s$ 
    follows by \autoref{lm:find}.
    Following similar arguments to the case $|\mathcal{C}_s| = 0$, 
    the time needed for the insertion and restructuring is 
    expected amortized $O(\log^3 n)$.
    
    Finally, we consider the case $|\mathcal{C}_s| > 1$.
    By \autoref{lm:find}, finding $\mathcal{C}_s$ 
    takes worst case time $O(|\mathcal{C}_s|\cdot\log^3 n)$.
    Then the insertion of $s$ can be done in expected amortized time 
    $O(\log^3 n)$, as in the cases above.
    It remains to analyze the running time of the clean-up step. 
    We know that the first common ancestor might be the root of 
    $T_\mathcal{C}$.
    Hence, in the worst case, we have to perform 
    $\sum_{C_i \in (\mathcal{C}_s \setminus \{C_L\})} |C_i|\cdot O(\log n)$ 
    insertions and deletions for a single clean-up step.
    As the time for the deletions in the \AWNN structures dominates, 
    this is expected amortized 
    $\sum_{C_i \in \mathcal{C}_s \setminus \{C_L\}} O(|C_i|\cdot \log^5 n)$ 
    worst case time. 
    Note that since $|C_i| \geq 1$ and we have to insert $s$, 
    the running time of \autoref{lm:find} is dominated by the clean-up step.

    The overall time spent on all clean-up steps over all insertions
    is then upper bounded by 
    $\sum_{s\in S} \sum_{C_i\in (\mathcal{C}_s\setminus \{C_L\})} 
    O(|C_i|\cdot \log^5 n)$.
    Observe, that during the lifetime of the component tree, 
    each disk can only be merged into $O(\log n)$ connected components. 
    Thus, we have 
    \[
    \sum_{s\in S} \sum_{C_i\in \mathcal{C}_s\setminus \{C_L\}} |C_i| = 
    O(n\log n), 
    \]
    and the overall expected time spent on clean-up steps is $O(n\log^6 n)$.
    As the case $|\mathcal{C}_s| > 1$ 
    turned out to be the most complex case, the overall running time follows. 
\end{proof}

\begin{theorem}\label{thm:ins_unbounded}
There is an incremental data structure for 
connectivity queries in disk graphs with $O(\alpha(n))$ 
amortized query time and $O(\log^6 n)$ expected amortized update time.
\end{theorem}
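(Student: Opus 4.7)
The plan is to assemble the theorem from the two data structures described in the preceding text: a disjoint set union (\textsf{DSU}) structure that represents the connected components of $\cD(S)$, and the component tree $T_\mathcal{C}$ maintained according to Invariants~1 and~2. The \textsf{DSU} will answer connectivity queries, and $T_\mathcal{C}$ will be used to discover which components a newly inserted disk merges together.

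For the query operation, given two sites $s,t \in S$, I would simply perform two \textsf{Find} calls in the \textsf{DSU} and compare the returned representatives. By the classical analysis of \textsf{DSU} with union-by-rank and path compression (see \cite{cormen_introduction_2009}, as already cited), this takes $O(\alpha(n))$ amortized time, which matches the claimed query bound.

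For an insertion of a new site $s$, I would first invoke \autoref{lm:insertDisk} to update $T_\mathcal{C}$; this yields the set $\mathcal{C}_s$ of components of $\cD(S)$ hit by $D_s$ (obtained via \autoref{lm:find} inside the proof of \autoref{lm:insertDisk}) and restructures $T_\mathcal{C}$ so that Invariants~1 and~2 continue to hold. I would then reflect the same merging in the \textsf{DSU}: create a new singleton for $s$, and perform $|\mathcal{C}_s|$ \textsf{Union} operations between $s$ and one representative from each component in $\mathcal{C}_s$. By \autoref{lm:insertDisk}, the component-tree update costs $O(\log^6 n)$ expected amortized time, and the extra \textsf{DSU} work is $O(|\mathcal{C}_s|\,\alpha(n))$, which is absorbed into the clean-up cost already charged in the proof of \autoref{lm:insertDisk} (each $C_i \in \mathcal{C}_s \setminus \{C_L\}$ contributes $\Omega(|C_i|\log^5 n)$ there, dominating its $O(\alpha(n))$ \textsf{Union} cost). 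Summed over all insertions, the update time is $O(\log^6 n)$ amortized expected per insertion.

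There is no real obstacle here since the heavy lifting is done by \autoref{lm:insertDisk}; the only thing to check carefully is that the \textsf{DSU} updates can indeed be piggy-backed onto the component-tree updates without inflating the asymptotic bound, and that the query bound does not rely on any interaction with $T_\mathcal{C}$ (it does not, since \textsf{Find} operates purely on the \textsf{DSU}). Combining the two bounds yields the claimed $O(\alpha(n))$ amortized query time and $O(\log^6 n)$ expected amortized update time, completing the proof.
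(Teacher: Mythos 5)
Your proposal matches the paper's proof: it combines the disjoint set union structure for $O(\alpha(n))$ amortized queries with \autoref{lm:insertDisk} for the component-tree updates, and observes that the union operations reflecting the merges are dominated by the component-tree cost. This is essentially the same argument, with slightly more explicit accounting of the \textsf{DSU} charging.
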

\begin{proof}
We use a component tree as described above to maintain the 
connected components. Additionally, we maintain a disjoint set 
data structure $\mathcal{H}$, where each connected component forms a set, see \autoref{fig:ins_unbounded}. 
Queries are performed directly in $\mathcal{H}$ in $O(\alpha(n))$ 
amortized time.

When inserting an isolated component during the update, 
this component is added to $\mathcal{H}$.
When merging several connected components 
in the clean-up step, this change is reflected in $\mathcal{H}$ 
by suitable union operations.
The time for updates in the component tree dominates the updates in 
$\mathcal{H}$, leading to an expected amortized update time of $O(\log^6 n)$.
\end{proof}

\section{Conclusion}

We introduced a data structure that solves the incremental 
connectivity problem in general disk graphs with $O(\alpha(n))$ 
amortized query and $O(\log ^6 n)$ amortized expected 
update time.
The question of finding efficient fully-dynamic 
data structures for both the general and the bounded case remains open.

\bibliography{connectivity}
\end{document}